\documentclass[12pt,reqno]{article}

\usepackage[usenames]{color}
\usepackage{amssymb}
\usepackage{amsmath}
\usepackage{amsthm}
\usepackage{amsfonts}
\usepackage{amscd}
\usepackage{graphicx}

\usepackage[colorlinks=true,
linkcolor=webgreen,
filecolor=webbrown,
citecolor=webgreen]{hyperref}

\definecolor{webgreen}{rgb}{0,.5,0}
\definecolor{webbrown}{rgb}{.6,0,0}

\usepackage{color}
\usepackage{fullpage}
\usepackage{float}

\usepackage{graphics}
\usepackage{latexsym}
\usepackage{epsf}
\usepackage{breakurl}

\setlength{\textwidth}{6.5in}
\setlength{\oddsidemargin}{.1in}
\setlength{\evensidemargin}{.1in}
\setlength{\topmargin}{-.1in}
\setlength{\textheight}{8.4in}

\DeclareMathOperator{\sw}{sw}
\DeclareMathOperator{\lw}{lw}

\def\andd{ \, \wedge \, }
\def\orr{ \, \vee \, }
\DeclareMathOperator{\per}{per}
\DeclareMathOperator{\rep}{rep}
\DeclareMathOperator{\firstfac}{firstfac}
\DeclareMathOperator{\twofac}{twofac}

\begin{document}

\theoremstyle{plain}
\newtheorem{theorem}{Theorem}
\newtheorem{corollary}[theorem]{Corollary}
\newtheorem{lemma}[theorem]{Lemma}
\newtheorem{proposition}[theorem]{Proposition}

\newtheorem{definition}[theorem]{Definition}
\theoremstyle{definition}
\newtheorem{example}[theorem]{Example}
\newtheorem{conjecture}[theorem]{Conjecture}

\theoremstyle{remark}
\newtheorem{remark}[theorem]{Remark}

\title{Repetition factorization of automatic sequences}

\author{
Jeffrey Shallit and Xinhao Xu \\
School of Computer Science \\
University of Waterloo \\
Waterloo, ON  N2L 3G1 \\
Canada\\
\href{mailto:shallit@uwaterloo.ca}{\tt shallit@uwaterloo.ca} \\
\href{mailto:x427xu@uwaterloo.ca}{\tt x427xu@uwaterloo.ca}
}

\maketitle

\begin{abstract}
Following Inoue et al., we define a word to be a {\it repetition} if it is a (fractional) power of exponent at least $2$.  A word has a {\it repetition factorization} if it is the product of repetitions.  We study repetition factorizations in several (generalized) automatic sequences, including the infinite Fibonacci word, the Thue-Morse word, paperfolding words, and the Rudin-Shapiro sequence.
\end{abstract}

\section{Introduction}

Let $\Sigma$ be a finite alphabet and let
$w \in \Sigma^+$.  We let $w[i]$ denote the
$i$'th letter of the finite word $w$, where we index
starting at position $1$.   We say that an integer $p \geq 1$ is a {\it period\/} of $w$ if
$p\leq |w|$ and $w[i]=w[i+p]$ for $1 \leq i \leq |w|-p$.   For example, the integers $3,6,$ and $7$ are
all periods of $w = {\tt alfalfa}$.

Following Inoue et al.~\cite{Inoue:2022}, we call a nonempty word $w$ a {\it repetition\/} if $w$ has a period $p \leq |w|/2$.  For example, {\tt alfalfa} is a repetition, but {\tt salsa} is not.

A {\it repetition factorization\/} of a word $w$
is a nonempty list of nonempty words $w_1, w_2, \ldots, w_t$ such
that $w = w_1 w_2 \cdots w_t$ and each $w_i$ is
a repetition.    The {\it width\/} of such a factorization
is the integer $t$, the number of terms appearing
in it.
Some words have no repetition
factorization at all, while others have multiple such
factorizations---of the same or even different widths.  For example,
$aaaababa = (aaaa)(baba)= (aaa)(ababa) = (aa)(aa)(baba)$.    For a given
finite word $w$, two factorization widths are of particular interest:   the shortest and the longest.  We denote them by $\sw(w)$ and $\lw(w)$,
respectively.  If $w$ has no repetition factorization,
then we define $\sw(w) = \lw(w) = 0$.

Recently Inoue et al.~\cite{Inoue:2022} gave an efficient algorithm
to compute repetition factorizations, and they
also studied the repetition factorizations of
Fibonacci words.  These are the finite words defined
by $F_1 = {\tt 1}$, $F_2 = {\tt 0}$, and
$F_n = F_{n-1} F_{n-2}$ for $n \geq 3$.   They also determined the shortest and longest possible widths of the factorizations of Fibonacci words.  This study was followed up by Kishi et al.~\cite{Kishi:2023}.

Inspired by their work,
our goal in this paper is to study repetition
factorizations in more generality.   This generalization is in two different ways.  Instead of just the Fibonacci word, we study automatic sequences more generally.  And instead of just {\it prefixes\/} of specific lengths, we study {\it all factor\/}s of automatic
sequences.

One of the tools we use in this paper is {\tt Walnut}, a program to prove or disprove first-order statements about automatic sequences.  For more about {\tt Walnut}, see \cite{Mousavi:2016,Shallit:2022}.

\section{Results for (generalized) automatic sequences}

Recall that an infinite word (or infinite sequence; for us the terms are synonymous)  ${\bf a} = (a_n)_{n \geq 0}$ is said to be $k$-automatic, for $k \geq 2$ an integer, if there exists a deterministic finite automaton with output (DFAO) that, on input $n$ expressed in base-$k$, ends in a state with output $a_n$.  For more information about automatic sequences, see, for example, \cite{Allouche&Shallit:2003}.  Typical examples of sequences in this class are the Thue-Morse sequence \cite{Thue:1912,Berstel:1995} and the Rudin-Shapiro sequence \cite{Rudin:1959,Shapiro:1952}.

This concept can be generalized further, to sequences where the input $n$ is specified in a more exotic numeration system where the addition relation can be computed by a finite automaton.   A typical example of a sequence in this class is the infinite binary Fibonacci word \cite{Berstel:1980b,Berstel:1986b} where the underlying numeration system is the so-called Zeckendorf system.

The following theorem is the basis for the results in this paper.
\begin{theorem}
\leavevmode
Suppose $\bf x$ is a generalized automatic sequence.
\begin{itemize}
    \item[(a)]
    For every finite $B$ there is a DFAO that, on input $i$ and $n$, computes the width of the  
 shortest repetition
    factorization of ${\bf x}[i..i+n-1]$, provided it is  at most $B$.  Furthermore,
    there is an algorithm to compute this DFAO.

\item[(b)] Furthermore, if there exists a universal bound $B'$ such that $\sw(w) \leq B'$ for all factors (resp., all prefixes) $w$ of $\bf x$, then there is an algorithm to determine this $B'$.

\item[(c)] For every finite $t$ there is a DFAO that, on input $i$ and $n$,  determines if ${\bf x}[i..i+n-1]$ has  a {\rm unique\/} repetition factorization of width exactly $t$ (resp., of width at most $t$).

\end{itemize}
    \label{thm1}
\end{theorem}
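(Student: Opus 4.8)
The plan is to translate every assertion into a first-order formula over the underlying numeration system and then invoke the decidability of such theories (as implemented in \texttt{Walnut}; see \cite{Mousavi:2016,Shallit:2022}); the only genuine difficulty is that ``admitting a repetition factorization at all'' is \emph{not} uniformly first-order, which is exactly why (a) and (c) fix a bound and (b) needs the boundedness hypothesis. I begin with the basic formulas. For positions $a<c$, the block $\mathbf{x}[a..c-1]$ is a repetition iff
\[
  \mathrm{Rep}(a,c)\ :=\ \exists p\;\bigl(p\ge 1\ \wedge\ a+2p\le c\ \wedge\ \forall k\,(a\le k\ \wedge\ k+p<c\ \Rightarrow\ \mathbf{x}[k]=\mathbf{x}[k+p])\bigr),
\]
which is first-order. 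For each fixed $s\ge 1$, ``$\mathbf{x}[i..i+n-1]$ has a repetition factorization of width exactly $s$'' is
\[
  \Phi_s(i,n)\ :=\ \exists b_1\cdots b_{s-1}\;\bigl(i<b_1<\cdots<b_{s-1}<i+n\ \wedge\ \mathrm{Rep}(i,b_1)\wedge\mathrm{Rep}(b_1,b_2)\wedge\cdots\wedge\mathrm{Rep}(b_{s-1},i+n)\bigr)
\]
(with $\Phi_0(i,n):=(n=0)$), and ``width at most $t$'' is the finite disjunction $\Psi_t:=\bigvee_{s=0}^{t}\Phi_s$. Since $\mathbf{x}$ is a generalized automatic sequence, the first-order theory of $\mathbb{N}$ equipped with $+$, $<$, the numeration relation and the map $n\mapsto\mathbf{x}[n]$ is decidable, and from each such formula one can effectively construct a DFAO computing it; in particular every $\Phi_s$ and every $\Psi_t$ is computed by an effectively constructible $\{0,1\}$-valued DFAO on inputs $(i,n)$.

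Parts (a) and (c) then follow quickly. For (a), since $\Psi_t\Rightarrow\Psi_{t+1}$, the direct product of the DFAOs for $\Psi_0,\dots,\Psi_B$ reads $(i,n)$ and outputs a nondecreasing tuple $0^{a}1^{\,B+1-a}$; relabel it by the integer $a$ when $a\le B$ and by a sentinel symbol otherwise. Whenever $\mathbf{x}[i..i+n-1]$ admits a repetition factorization, $a$ is exactly its shortest width, so the resulting DFAO outputs $\sw(\mathbf{x}[i..i+n-1])$ precisely when that value lies in $\{1,\dots,B\}$ (and one may rename the sentinel to $0$ under the paper's convention). For (c), ``unique repetition factorization of width exactly $t$'' is the $\exists!$\,-pattern $\exists\vec b\,[\,\theta_t(i,n;\vec b)\wedge\forall\vec c\,(\theta_t(i,n;\vec c)\Rightarrow\vec b=\vec c)\,]$, where $\theta_t$ is the quantifier-free matrix of $\Phi_t$; and ``unique repetition factorization of width at most $t$'' is ``$\Psi_t$ holds and there are no two distinct factorizations of widths $\le t$'', the second clause being the finite disjunction over pairs $s\le s'\le t$ of $\exists\vec b\,\exists\vec c\,(\theta_s(i,n;\vec b)\wedge\theta_{s'}(i,n;\vec c))$ when $s<s'$, and of $\exists\vec b\,\exists\vec c\,(\theta_s(i,n;\vec b)\wedge\theta_s(i,n;\vec c)\wedge\bigvee_j b_j\neq c_j)$ when $s=s'$. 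Both are first-order, hence computed by effectively constructible DFAOs (with binary output).

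Part (b) is where an idea is needed, because a valid $B'$ exists but is not handed to us, and ``$\sw(w)\le t$ for every factor $w$'' cannot be decided merely by bounding the width that occurs. The key point is that the sentence $P_t:=\forall i\,\forall n\,(\Psi_{t+1}(i,n)\Rightarrow\Psi_t(i,n))$ is first-order, hence decidable, and is self-propagating: if $P_t$ holds then so does $P_{t+1}$, since given a width-$(t+2)$ factorization $u_1\cdots u_{t+2}$ of a factor $w$, the prefix $u_1\cdots u_{t+1}$ is itself a factor (a prefix of $w$ beginning at the same position) possessing a width-$(t+1)$ factorization, so by $P_t$ it has one of width $\le t$, say $z_1\cdots z_r$ with $r\le t$, and then $z_1\cdots z_r\,u_{t+2}$ is a factorization of $w$ of width $\le t+1$. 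Iterating, $P_t$ implies that every factor possessing \emph{any} repetition factorization has one of width $\le t$; the converse is trivial, so $P_t$ holds exactly when $t$ is a valid universal bound. Hence $\min\{t\ge 0:P_t\text{ holds}\}$ is the least universal bound (equivalently $\sup_w\sw(w)$), it is finite by hypothesis, and testing $P_0,P_1,P_2,\dots$ in turn terminates and returns it; the ``prefixes'' version is obtained verbatim by fixing $i=0$, since a prefix of a prefix is a prefix. I expect this self-propagation step — recognizing that a single first-order sentence $P_t$ captures ``the universal bound is $\le t$'' — to be the main obstacle, as it is precisely what lets us sidestep the non-first-order-expressibility of unbounded-width factorizability.
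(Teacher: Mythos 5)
Your proposal is correct and follows essentially the same route as the paper: express periods, repetitions, and bounded-width factorizability as first-order formulas over the numeration system, invoke the B\"uchi--Bruy\`ere decidability theorem to get effectively constructible automata, combine them by a product construction for (a), detect a fixpoint $\Psi_t\equiv\Psi_{t+1}$ for (b), and compare breakpoint sequences for (c). Your explicit self-propagation argument for why the fixpoint test in (b) certifies a universal bound (peeling off the last repetition and reusing $P_t$ on the remaining prefix) is in fact spelled out more completely than in the paper, whose justification only treats factors of shortest width exactly $t+1$.
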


The main tool we use to prove this theorem is a classic result due, in its essential points, to B\"uchi \cite{Buchi:1960} and later corrected and extended by
others \cite{Bruyere&Hansel&Michaux&Villemaire:1994}:
\begin{theorem}
Let $\varphi$ be a first-order logical formula involving the logical operations, addition and comparison of natural numbers, and indexing into a (generalized) automatic sequence.   There is an algorithm that produces an automaton $A$ with the following properties:   if $\varphi$ has free variables, then $A$ accepts exactly the values of the free variables that make $\varphi$ true.   If $\varphi$ has no free variables, then the algorithm returns either {\tt TRUE} or {\tt FALSE}.
\end{theorem}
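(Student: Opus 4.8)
The plan is to prove this the standard automata-theoretic way, by structural induction on the formula $\varphi$, following B\"uchi. Fix once and for all the underlying numeration system: base-$k$ with digit alphabet $D=\{0,1,\dots,k-1\}$ in the ordinary automatic case, or a generalized (e.g., Zeckendorf) system with its own finite digit alphabet $D$ in the generalized case. The only facts about the system I use are: (i) the set of canonical representations of $\mathbb{N}$ is a regular language over $D$; (ii) the graph of addition $\{(x,y,z):x+y=z\}$, with the three representations written in parallel and padded to a common length, is accepted by a finite automaton --- the carry automaton in base $k$, and in a generalized system exactly the hypothesis that the numeration system is \emph{recognizable}, which is what makes ``generalized automatic'' meaningful; and (iii) for each letter $a$ the set $\{n:{\bf x}[n]=a\}$ is regular, which is immediate from the DFAO defining ${\bf x}$. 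The induction maintains the invariant that for every subformula $\psi$ with free variables among $v_1,\dots,v_r$ we have constructed a finite automaton $A_\psi$ over $D^r$ which, reading the $r$ representations in parallel in a fixed direction (say least-significant-digit first, padded with trailing zeros to a common length), accepts precisely the representations of the tuples satisfying $\psi$.

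For the base cases, each atomic formula --- $v_1=v_2$, $v_1<v_2$, $v_1+v_2=v_3$, and ${\bf x}[v_1]=a$ (hence also ${\bf x}[v_1]={\bf x}[v_2]$, via a product construction) --- is handled by writing down an explicit small automaton using (i)--(iii), intersected if desired with the regular set of canonical representations so that non-representations are never accepted. For the inductive steps: for $\neg\psi$, determinize $A_\psi$ by the subset construction, complete it, swap accepting and non-accepting states, and intersect with the canonical-representation language; for $\psi_1\vee\psi_2$ and $\psi_1\wedge\psi_2$, first cylindrify the two automata to a common free-variable list (inserting an inert track for each variable occurring in only one of them) and then take the usual product automaton with the ``or'' (resp.\ ``and'') acceptance condition; for $\exists v_j\,\psi$, erase the $v_j$-track from $A_\psi$, yielding a nondeterministic automaton on the remaining tracks that in effect guesses the digits of $v_j$, and then determinize it.

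Finally, if $\varphi$ is a sentence then $A_\varphi$ is an automaton over the empty ($r=0$) alphabet, which accepts the empty input iff its state is accepting, so the algorithm outputs \texttt{TRUE} or \texttt{FALSE} accordingly. Every operation used --- product, subset construction, complementation, intersection with a fixed regular language --- is effective and the recursion on $\varphi$ is finite, so the procedure is an algorithm. The genuinely delicate point, and the reason B\"uchi's original argument needed the later correction of \cite{Bruyere&Hansel&Michaux&Villemaire:1994}, is the bookkeeping of representations of differing lengths (and of non-canonical representations) across the $\exists$ and $\neg$ steps: projecting out $v_j$ must yield an automaton that accepts exactly those tuples for which a witness of \emph{arbitrary} length exists, which is arranged by fixing the reading direction and closing the accepting set under appending inert all-zero padding, and complementation must be taken relative to the set of canonical tuples rather than to $(D^r)^*$. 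The other essential ingredient, the adder automaton for a generalized numeration system, is not established here but is built into the hypothesis; granting it, the induction is routine.
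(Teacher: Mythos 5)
The paper does not prove this theorem at all: it is quoted as a classical result and attributed to B\"uchi \cite{Buchi:1960} as corrected and extended in \cite{Bruyere&Hansel&Michaux&Villemaire:1994}, so there is no in-paper argument to compare against. Your proof is the standard one from that literature --- structural induction building an automaton for each subformula, with atomic predicates handled by the adder/comparator/DFAO automata, Boolean connectives by product and complementation, and existential quantification by projection followed by determinization --- and it is correct in outline. You also correctly isolate the two genuinely delicate points: (1) complementation must be taken relative to the regular set of canonical (padded) representations rather than all of $(D^r)^*$, and (2) after projecting out a quantified track one must close under padding so that witnesses longer than the remaining tracks are not lost; these are precisely the issues behind the corrigendum to \cite{Bruyere&Hansel&Michaux&Villemaire:1994}. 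Two small points you gloss over but which are easily repaired: the regularity of the order relation $<$ in a generalized numeration system follows from your hypothesis (ii) via $x<y \iff \exists z\,(z\geq 1 \wedge x+z=y)$ rather than needing a separate assumption, and if the DFAO for $\bf x$ reads representations in the opposite direction from your fixed lsd-first convention you must invoke closure of regular languages under reversal. With those remarks your sketch is a faithful rendering of the cited proof.
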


\begin{proof}[Proof of Theorem~\ref{thm1}.]
For the first claim, it suffices to create a first-order logical formula asserting that the factor ${\bf x}[i..i+n-1]$ has a factorization of width $\leq B$.    We do this in several steps.

First, let us create a formula $\per(i,n,p)$ that asserts that $p$ is a period of the factor ${\bf x}[i..i+n-1]$:
$$ p\geq 1 \andd p \leq n \andd  \forall j \  (j+p<n) \implies {\bf x}[i+j]={\bf x}[i+j+p].$$

Next, let's create a formula  $\rep_1(i,n)$ that asserts that 
${\bf x}[i..i+n-1]$ is a repetition:
$$ \exists p \ \per(i,n,p)  \andd 2p \leq n .$$

Next, let's create a series of formulas 
$\rep_2(i,n)$, $\rep_3(i,n)$, and so forth, such that
$\rep_t(i,n)$ asserts that ${\bf x}[i..i+n-1]$ is a concatenation of $t$ or fewer repetitions, for $t \geq 2$:
$$ \rep_{t-1}(i,n) \orr \exists j\  j\geq 1 \andd j<n \andd \rep_1(i,j)  \andd
\rep_{t-1}(i+j,n-j) .$$

For the first claim, it suffices to create the corresponding automata up to $t= B$.  Then the automaton of least index accepting $(i,n)$ corresponds to 
the width of ${\bf x}[i..i+n-1]$.    Using a standard product construction, we can then combine the $B'$ different automata into one DFAO that  on input $(i,n)$ computes the least $t$ for which $\rep_t(i,n)$ holds.

For the second claim, we create the series of automata $\rep_t(i,n)$ for $t = 1,2,3,\ldots$, checking at every step if $\rep_t(i,n)$ and $\rep_{t+1}(i,n)$ recognize the same language.
We can do this with the following formula:
$$ \forall i, n\  \rep_t(i,n) \iff \rep_{t+1}(i,n).$$
If they do recognize the same language, then we can take $B' = t$.  
To see that this works, 
suppose, contrary to what we want to prove, that for at least one factor $y := {\bf x}[i..i+n-1]$  with a repetition factorization, at least width $t+1$ is required. Then $\rep_{t+1}(i,n)$ accepts $y$ and hence so does $\rep_t(i,n)$, a contradiction.  

Exactly the same ideas work for prefixes alone, instead of arbitrary factors.

\bigskip
(c)  Suppose there are at least two different width-$t$ different factorizations of ${\bf x}[i..i+n-1]$.   Then there would be two sets of indices
$i_1 = i < i_2 < \cdots < i_{t+1} =i+n$
and
$j_1 = i < j_2 < \cdots < j_{t+1} =i+n$ 
such that 
${\bf x}[i_s..i_{s+1}-1]$
and ${\bf x}[j_s..j_{s+1}-1]$
are all repetitions for 
$1 \leq s \leq t$ and
$i_s \not= j_s$ for at least
one $s$.   It is easy to write a first-order formula expressing this.   To handle the case of factorizations of width $\leq t$, we additionally perform the same check for every $t'<t$, and then also check the existence of factorizations of different two different widths.
\end{proof}

\begin{remark}
If there is a universal bound $B$ on the width of factors of $\bf x$, then we can check  the uniqueness of the repetition factorization of {\it all\/} factors as follows:  ensure there are no factors with two repetition factorizations, where the first term is of different lengths.   This is clearly necessary.  To see it is sufficient, assume $y$ is a shortest factor with two different repetition factorizations.  Then if the first term is of the same length in both, we could delete it to get a shorter counterexample.  

First let us write a formula
$\firstfac(i,n,t)$ asserting
that ${\bf x}[i..i+n-1]$ has a repetition factorization where the first term is of length $t$:
$$ 1\leq t \andd t\leq n \andd \rep_1(i,t) \andd
\rep_{B-1}(i+t,n-t).$$
Next, we assert that there are two such factorizations with different $t$:  
$$\twofac(i,n) := \exists t_1, t_2 \ (t_1 \not= t_2) \andd \firstfac(i, n, t_1) \andd \firstfac(i,n, t_2) .$$
Finally, the formula
$$ \neg\exists i, n \ \twofac(i,n)$$
guarantees uniqueness.
\label{rmk1}
\end{remark}

In the remaining sections, we apply these ideas to various infinite words of interest.

\section{Results about factors of the Fibonacci word}

The famous infinite Fibonacci word
${\bf f} = 01001010\cdots$ can be defined in many
different ways.  Possibly the easiest is as the
fixed point of the morphism $0 \rightarrow 01$,
$1 \rightarrow 0$.

\begin{theorem}
\leavevmode
\begin{itemize}
\item[(a)] 
Every finite factor of {\bf f}, if it has a repetition factorization, has
one of width at most $3$.
\item[(b)] Every finite prefix of {\bf f}, if it has a repetition factorization, has
one of width at most $2$.
\end{itemize}
\end{theorem}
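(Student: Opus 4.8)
The plan is to deduce both parts directly from the constructions in the proof of Theorem~\ref{thm1}, applied to the infinite Fibonacci word $\bf f$ together with its Zeckendorf numeration system, and then to run the resulting decision procedure in {\tt Walnut}. Recall the formulas $\per(i,n,p)$, $\rep_1(i,n)$, and $\rep_t(i,n)$ for $t \ge 2$ introduced there; with ${\bf f}$ substituted for $\bf x$, each is a genuine first-order formula over Zeckendorf arithmetic, so each $\rep_t$ defines an automaton $A_t$ accepting exactly the pairs $(i,n)$ for which ${\bf f}[i..i+n-1]$ is a product of at most $t$ repetitions. (Here ${\bf f}$ is $0$-indexed, matching the {\tt Walnut} convention, and $\rep_t(i,0)$ is identically false, so the degenerate case $n=0$ causes no trouble.)

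For part~(a), I would build $A_1,A_2,A_3,A_4$ and verify with {\tt Walnut} that
$$ \forall i, n \ \rep_3(i,n) \iff \rep_4(i,n) $$
evaluates to {\tt TRUE}. Once $\rep_t$ and $\rep_{t+1}$ define the same language, so do $\rep_{t+1}$ and $\rep_{t+2}$: by the recursion, $\rep_{t+2}(i,n) = \rep_{t+1}(i,n) \orr \exists j\, (\cdots \andd \rep_1(i,j) \andd \rep_{t+1}(i+j,n-j))$, and replacing $\rep_{t+1}$ by $\rep_t$ in the last conjunct turns the second disjunct into the second disjunct of $\rep_{t+1}$, which is already contained in $\rep_{t+1}$. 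Hence the sequence of automata has stabilized by $t = 3$, which is precisely Theorem~\ref{thm1}(b) with $B' = 3$: every factor of $\bf f$ possessing a repetition factorization has one of width at most $3$. (To show that $3$ is best possible one can additionally check $\rep_2 \ne \rep_3$ and extract a short explicit witness, e.g.\ the least accepted pair; this is not needed for the stated upper bound.)

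For part~(b), the only change is to fix the first argument to $0$, i.e.\ to consider the prefixes ${\bf f}[0..n-1]$. Here one checks that
$$ \forall n \ \rep_2(0,n) \iff \rep_3(0,n) $$
returns {\tt TRUE}; by the same stabilization argument the sequence of prefix-languages has settled by $t = 2$, so every prefix of $\bf f$ with a repetition factorization has one of width at most $2$. This is strictly stronger than part~(a) specialized to prefixes, reflecting the rigid structure (the square and cube prefixes, the recursive appearance of earlier Fibonacci words) of $\bf f$.

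The genuinely substantive point — and the only real obstacle — is that a priori nothing guarantees that the sequence $A_1, A_2, \ldots$ (or its prefix analogue) ever stabilizes, so the content of the theorem is exactly that it does, and already at the small values $3$ and $2$. Operationally this is decidable: generate the $A_t$ in turn and halt when $A_t$ and $A_{t+1}$ are equivalent. The risk is therefore one of feasibility rather than logic: $\rep_1$ already carries a universal quantifier inside $\per$ together with an existential over $p$, and each step of the recursion adds another existential, so the automata could in principle grow fast. For the Fibonacci word these automata are known to remain small, and I expect {\tt Walnut} to dispatch both equivalence checks easily; the conclusion then rests on the soundness of the {\tt Walnut} decision procedure and on the stabilization-propagation observation above.
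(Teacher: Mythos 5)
Your approach is the same as the paper's: instantiate the formulas $\per$, $\rep_1$, $\rep_t$ from Theorem~\ref{thm1} for $\bf f$ over the Zeckendorf numeration system, have {\tt Walnut} check $\forall i,n\ \rep_3(i,n)\iff\rep_4(i,n)$ for part~(a), and check the prefix version $\forall n\ \rep_2(0,n)\iff\rep_3(0,n)$ for part~(b); this matches the paper's {\tt rep}$k${\tt \_f} / {\tt pref}$k${\tt \_f} code exactly. Your stabilization-propagation observation for part~(a) is correct and in fact tightens the somewhat terse justification the paper gives in Theorem~\ref{thm1}(b): from $\rep_t=\rep_{t+1}$ as languages over \emph{all} pairs $(i,n)$, the recursion yields $\rep_{t+1}=\rep_{t+2}$, and so on by induction.

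There is, however, one slip in part~(b): you claim that ``by the same stabilization argument the sequence of prefix-languages has settled by $t=2$.'' That propagation argument does not apply to the languages $P_t=\{n : \rep_t(0,n)\}$, because the recursion $\rep_{t+1}(0,n)=\rep_t(0,n)\orr\exists j\,(\cdots\andd\rep_1(0,j)\andd\rep_t(j,n-j))$ passes through factors starting at positions $j>0$; knowing $P_2=P_3$ says nothing about $\rep_2(j,\cdot)$ versus $\rep_3(j,\cdot)$ for $j>0$, so $P_3=P_4$ does not follow from $P_2=P_3$ alone. The fix is immediate and is what the paper implicitly does: part~(a) already gives $\rep_3=\rep_t$ for all $t\ge 3$ over all factors, hence $P_3=P_t$ for all $t\ge 3$, and the single additional check $P_2=P_3$ then yields the bound $2$ for prefixes. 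With that repair your proof is complete and coincides with the paper's.
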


\begin{proof}
(a):  
It suffices to translate the logical formulas of the previous section into {\tt Walnut} code, and then check the results.  Here `{\tt rep}$n${\tt \_f}' asserts that ${\bf f}[i..i+n-1]$ is
the concatenation of at most $n$ repetitions.  

To understand the code that follows, note that {\tt \&} is logical AND, {\tt A} and {\tt E} represent the universal and existential quantifiers, respectively, {\tt =>} is logical implication, 
{\tt <=>} is logical IFF, {\tt |} is logical OR, and {\tt F} is {\tt Walnut}'s way to represent the infinite Fibonacci word, indexed starting at position $0$.
\begin{verbatim}
def per_f "?msd_fib p>=1 & p<=n & Aj (j+p<n) => F[i+j]=F[i+j+p]":
# 42 states

def rep1_f "?msd_fib Ep $per_f(i,n,p) & 2*p<=n":
#16 states

def rep2_f "?msd_fib $rep1_f(i,n) | Ej j>=1 & j<n & $rep1_f(i,j) & 
   $rep1_f(i+j,n-j)":
#64 states

def rep3_f "?msd_fib $rep2_f(i,n) | Ej j>=1 & j<n & $rep1_f(i,j) & 
   $rep2_f(i+j,n-j)":
#35 states
   
def rep4_f "?msd_fib $rep3_f(i,n) | Ej j>=1 & j<n & $rep1_f(i,j) & $rep3_f(i+j,n-j)":
#35 states

eval thm3a "?msd_fib Ai,n $rep3_f(i,n) <=> $rep4_f(i,n)":
\end{verbatim}
and {\tt Walnut} returns
{\tt TRUE}.

\bigskip

(b) is handled similarly, with the following {\tt Walnut} code.
\begin{verbatim}
def pref1_f "?msd_fib $rep1_f(0,n)":
def pref2_f "?msd_fib $rep2_f(0,n)":
def pref3_f "?msd_fib $rep3_f(0,n)":
eval thm3b "?msd_fib An $pref3_f(n) <=> $pref2_f(n)":
combine FPRF pref2_f=2 pref1_f=1:
\end{verbatim}
The resulting automata is displayed in Figure~\ref{fig3}.   Here we have removed a ``dead state'', state 3, corresponding to illegal Fibonacci representations.
\begin{figure}[htb]
\begin{center}
   \includegraphics[width=6in]{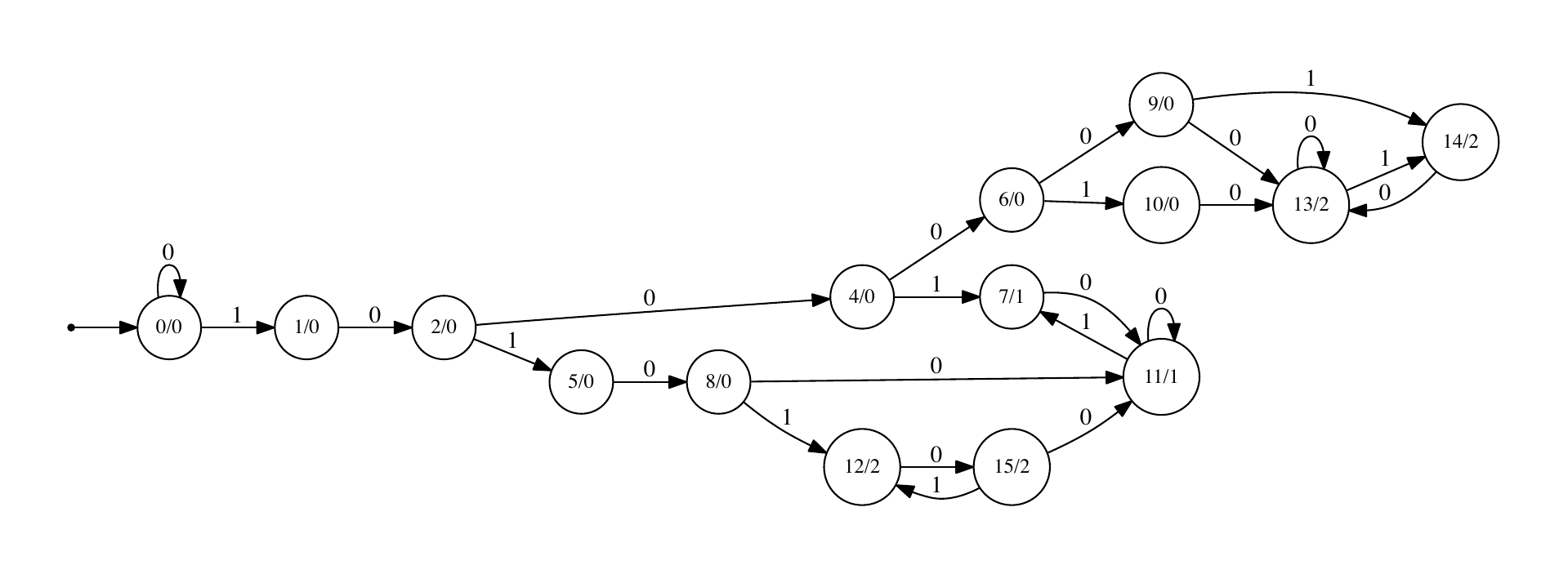}
\end{center}
\caption{DFAO to compute the shortest width of the prefix  ${\bf f}[0..n-1]$.}
\label{fig3}
\end{figure}

\end{proof}

\begin{corollary}
    There is a DFAO that, on input $i$ and $n$ in parallel in Zeckendorf representation, returns the width of the shortest repetition factorization of ${\bf f}[i..i+n-1]$, or $0$ if there is no such
    factorization.
\end{corollary}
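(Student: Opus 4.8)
The plan is to invoke Theorem~\ref{thm1}(a) with the bound $B = 3$. That result gives, for any generalized automatic sequence ${\bf x}$, a DFAO that on input $(i,n)$ outputs the width of the shortest repetition factorization of ${\bf x}[i..i+n-1]$ whenever that width is at most $B$, together with an algorithm to produce the DFAO. Here ${\bf x} = {\bf f}$ and the underlying numeration system is the Zeckendorf system, whose addition relation is automatic, so the construction applies and produces a DFAO reading $i$ and $n$ in parallel in Zeckendorf representation.

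The point that renders the ``provided it is at most $B$'' hypothesis harmless is part~(a) of the preceding theorem: every finite factor of ${\bf f}$ that admits a repetition factorization admits one of width at most $3$. Hence for every factor $w$ of ${\bf f}$ with $\sw(w) > 0$ we have $\sw(w) \le 3 = B$, so the DFAO of Theorem~\ref{thm1}(a) outputs $\sw(w)$ exactly. It remains to deal with factors having no repetition factorization at all. Following the construction in the proof of Theorem~\ref{thm1}(a), we build the automata for $\rep_1, \rep_2, \rep_3$ (in {\tt Walnut} these are {\tt rep1\_f}, {\tt rep2\_f}, {\tt rep3\_f} from the proof above) and combine them by the standard product construction into a single DFAO that on input $(i,n)$ returns the least $t \in \{1,2,3\}$ for which $\rep_t(i,n)$ holds, and returns $0$ when none of them holds. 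By part~(a) of the preceding theorem, the latter case occurs precisely when ${\bf f}[i..i+n-1]$ has no repetition factorization, so the DFAO has exactly the claimed input--output behaviour. In {\tt Walnut} one would carry this out with a {\tt combine} command on {\tt rep1\_f}, {\tt rep2\_f}, {\tt rep3\_f} with appropriate output values, analogously to the {\tt combine FPRF} command used for prefixes.

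There is essentially no deep obstacle; the corollary is a direct packaging of Theorem~\ref{thm1}(a) together with the width-$3$ bound. The only points needing a little care are (i) confirming that the least-accepting-index bookkeeping in the product construction genuinely assigns output $0$ to inputs rejected by all of $\rep_1,\rep_2,\rep_3$, and (ii) handling, as in Figure~\ref{fig3}, the ``dead state'' of the Zeckendorf automaton arising from non-canonical representations of the pair $(i,n)$, which should be pruned or given output $0$. Once these are in place, the resulting DFAO is the one asserted to exist.
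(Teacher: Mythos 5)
Your proposal is correct and matches the paper's approach: the paper's proof is exactly the {\tt combine} product construction on {\tt rep1\_f}, {\tt rep2\_f}, {\tt rep3\_f} (yielding a $107$-state DFAO), justified by the width-$3$ bound from the preceding theorem. You simply spell out the reasoning behind that one-line construction in more detail.
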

\begin{proof}
To create this DFAO, we use the {\tt combine} command of {\tt Walnut},
which uses a familiar direct product construction on automata.
\begin{verbatim}
combine FIBFAC rep3_f=3 rep2_f=2 rep1_f=1:
\end{verbatim}
It creates a $107$-state DFAO, but the figure is too large to present here. 
\end{proof}

\begin{remark}
Inoue et al.~\cite{Inoue:2022} determined the shortest factorization for Fibonacci strings.  This corresponds to the special case of Figure~\ref{fig3}, where $n$ is a Fibonacci number.
\end{remark}

\section{Results about the Thue-Morse word}

The Thue-Morse word $\bf t$ is defined to be the
fixed point of the morphism $0 \rightarrow 01$,
$1 \rightarrow 10$.    In this section, we investigate repetition
factorizations of factors of $\bf t$.   

If we look at such factorizations empirically, we quickly see that
factors can either not have a repetition factorization, or have
one of widths $1,2,\ldots, 7$.   For example, see Table~\ref{tab1}.
\begin{table}[H]
\begin{center}
\begin{tabular}{c|c|c|c}
$i$ & $n$ & $T[i..i+n-1]$ & width \\
\hline
1 & 2 & $1^2$ & 1\\
5 & 4 & $0^2 1^2$ & 2\\
5 & 6 & $0^2 1^2 0^2$ & 3 \\
5 & 12 & $0^2 1^2 0^2 (101)^2$ & 4 \\
5 & 14 & $0^2 1^2 0^2 (101)^2 0^2$ & 5 \\
45 & 20 & $0^2 (101)^2 0^2 1^2 0^2 (101)^2$ & 6 \\
45 & 22 & $0^2 (101)^2 0^2 1^2 0^2 (101)^2 0^2$ & 7 
\end{tabular}
\end{center}
\caption{Factorizations of widths $1$ through $7$.}
\label{tab1}
\end{table}
We now prove that indeed, only widths $1$ through $7$ are possible.

\begin{theorem}
If a factor of the Thue-Morse word has a repetition
factorization, then its width is at most $7$.
\begin{proof}
The idea is the same as for the Fibonacci word.   

\begin{verbatim}
def per_t "(p>=1) & Aj (j+p<n) => T[i+j]=T[i+j+p]":
# 31 states
def rep1_t "Ep $per_t(i,n,p) & 2*p<=n":
# 5 states
def rep2_t "$rep1_t(i,n) | Ej j>=1 & j<n & $rep1_t(i,j) & $rep1_t(i+j,n-j)":
# 8 states
def rep3_t "$rep2_t(i,n) | Ej j>=1 & j<n & $rep1_t(i,j) & $rep2_t(i+j,n-j)":
# 11 states
def rep4_t "$rep3_t(i,n) | Ej j>=1 & j<n & $rep1_t(i,j) & $rep3_t(i+j,n-j)":
# 12 states
def rep5_t "$rep4_t(i,n) | Ej j>=1 & j<n & $rep1_t(i,j) & $rep4_t(i+j,n-j)":
# 14 states
def rep6_t "$rep5_t(i,n) | Ej j>=1 & j<n & $rep1_t(i,j) & $rep5_t(i+j,n-j)":
# 15 states
def rep7_t "$rep6_t(i,n) | Ej j>=1 & j<n & $rep1_t(i,j) & $rep6_t(i+j,n-j)":
# 14 states
def rep8_t "$rep7_t(i,n) | Ej j>=1 & j<n & $rep1_t(i,j) & $rep7_t(i+j,n-j)":
# 14 states
eval thm5 "Ai,n $rep7_t(i,n) <=> $rep8_t(i,n)":
\end{verbatim}
and {\tt Walnut} returns
{\tt TRUE}. 
\end{proof}
\end{theorem}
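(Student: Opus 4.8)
The plan is to mimic, essentially line for line, the strategy from the proof of Theorem~\ref{thm1}(b) that was already executed for the Fibonacci word in the previous section. Since the Thue-Morse word $\bf t$ is $2$-automatic, the underlying numeration system is ordinary base $2$, so no {\tt ?msd\_fib} directive is needed and {\tt Walnut}'s defaults apply. First I would write the formula $\per(i,n,p)$ asserting that $p$ is a period of ${\bf t}[i..i+n-1]$, then $\rep_1(i,n)$ asserting that this factor is a repetition (i.e., has a period $p$ with $2p \leq n$), and then, inductively, the formulas $\rep_t(i,n)$ for $t = 2, 3, \ldots$, each asserting that ${\bf t}[i..i+n-1]$ is a concatenation of at most $t$ repetitions, via the disjunction $\rep_{t-1}(i,n) \orr \exists j\, (j \geq 1 \andd j < n \andd \rep_1(i,j) \andd \rep_{t-1}(i+j, n-j))$. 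Each such formula is converted by {\tt Walnut} into a finite automaton.

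The crux is deciding when to stop. Following Theorem~\ref{thm1}(b), after building the automaton for $\rep_t$ I would test, for successive $t$, whether $\forall i, n\ \rep_t(i,n) \iff \rep_{t+1}(i,n)$. The empirical evidence in Table~\ref{tab1}, which exhibits factors requiring each of the widths $1$ through $7$, strongly suggests that the first stabilization occurs at $t = 7$; so I would expect {\tt Walnut} to return {\tt TRUE} precisely for the query $\forall i, n\ \rep_7(i,n) \iff \rep_8(i,n)$. Once this equivalence is confirmed, the reasoning from Theorem~\ref{thm1}(b) closes the argument: if some factor $y = {\bf t}[i..i+n-1]$ with a repetition factorization required width at least $8$, then $\rep_8$ would accept $(i,n)$ while $\rep_7$ would not, contradicting the equivalence. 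Hence every factor of $\bf t$ that has a repetition factorization has one of width at most $7$.

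The main obstacle I expect is computational rather than conceptual. There is no a priori guarantee that the chain of automata $\rep_1, \rep_2, \ldots$ ever stabilizes, so this is really a semi-algorithm that must be run until an equivalence is observed; without the hint from Table~\ref{tab1} one would not know in advance to stop at $7$. Moreover, the nested existential quantifier in the inductive step, compounded by the final $\forall i, n$ check, could in principle cause a severe state blowup; in practice I would anticipate the $\per$ automaton and each $\rep_t$ automaton to have only a few dozen states, so {\tt Walnut} should handle them comfortably, but this must be verified by actually running the computation. Finally, I would note that width $7$ is not merely an upper bound but is attained---the entry $i = 45$, $n = 22$ of Table~\ref{tab1} is an explicit witness---so the constant $7$ in the statement is best possible.
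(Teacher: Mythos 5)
Your proposal matches the paper's proof essentially line for line: the same $\per$, $\rep_1, \ldots, \rep_8$ formulas in base $2$, the same {\tt Walnut} equivalence check $\forall i,n\ \rep_7(i,n) \iff \rep_8(i,n)$, and the same appeal to the reasoning of Theorem~\ref{thm1}(b) to conclude. Your added remarks on stabilization not being guaranteed a priori and on the tightness of the bound $7$ are accurate but do not change the argument.
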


\begin{corollary}
    If a factor of the Thue-Morse word has a repetition factorization, then this factorization is unique.
\end{corollary}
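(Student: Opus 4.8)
The plan is to invoke the machinery of Remark~\ref{rmk1} with the universal width bound $B = 7$ just established. By that remark it suffices to verify that no factor of ${\bf t}$ admits two repetition factorizations whose first terms have different lengths. For completeness, here is why this is enough: if some factor had two distinct repetition factorizations, choose a shortest such factor $y$. Its two factorizations cannot have first terms of the same length, since otherwise that common first term could be stripped off, yielding a strictly shorter factor of ${\bf t}$ (a factor of a factor is a factor) carrying two distinct repetition factorizations, contradicting the minimality of $y$.

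To carry this out in {\tt Walnut}, I would first build a predicate $\firstfac(i,n,t)$ that holds exactly when ${\bf t}[i..i+n-1]$ has a repetition factorization whose first term has length $t$. The first term must be a repetition, captured by $1 \le t \andd t \le n \andd \rep_1(i,t)$, and the remainder ${\bf t}[i+t..i+n-1]$ must itself have a repetition factorization. This is the one place where the bound $B = 7$ is essential: the remainder is again a factor of ${\bf t}$, so by the previous theorem it has a repetition factorization if and only if it has one of width at most $7$; hence ``the remainder has a repetition factorization'' is expressed by the single formula $\rep_7(i+t,n-t)$. Writing the degenerate case $t = n$ directly, this gives
$$ \firstfac(i,n,t) := 1 \le t \andd t \le n \andd \rep_1(i,t) \andd \bigl( t = n \orr \rep_7(i+t, n-t) \bigr). $$
The {\tt rep1\_t} and {\tt rep7\_t} automata were already constructed in the proof of the previous theorem, so this predicate is immediate; then, exactly as in Remark~\ref{rmk1}, I would set
$$ \twofac(i,n) := \exists t_1, t_2 \ (t_1 \ne t_2) \andd \firstfac(i,n,t_1) \andd \firstfac(i,n,t_2) $$
and ask {\tt Walnut} to evaluate the closed formula $\neg \exists i,n \ \twofac(i,n)$, which I expect returns {\tt TRUE}, establishing the corollary.

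There is no genuine mathematical obstacle here --- once the bound $B = 7$ is in hand the statement is a finite decidable check. The only points requiring care are the logical reduction itself (the minimal-counterexample argument above, showing that ruling out two factorizations with distinct first-term lengths really does force global uniqueness) and the correctness of $\firstfac$ as a finite first-order formula, which again relies on the previous theorem so that $\rep_7$ can stand in for ``has some repetition factorization''. On the practical side the only risk is blow-up of the intermediate automata, but since every $\rep_t$ automaton for $t \le 7$ has at most $15$ states, the computation is entirely routine.
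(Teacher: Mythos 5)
Your proposal is correct and takes essentially the same route as the paper: both invoke the machinery of Remark~\ref{rmk1} with $B=7$, define $\firstfac$ and $\twofac$ over the already-built {\tt rep}$n${\tt \_t} automata, and have {\tt Walnut} verify $\neg\exists i,n\ \twofac(i,n)$. The only (minor) divergence is that you use $\rep_7(i+t,n-t)$ together with an explicit $t=n$ disjunct for the remainder, whereas the paper follows the remark literally with $\rep_{B-1}=\rep_6$ and no special case; your version is if anything the more careful rendering of ``the remainder has some repetition factorization.''
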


\begin{proof}
We use the idea in Remark~\ref{rmk1}, and the 
following {\tt Walnut} code.   Here we use the fact that $B = 7$.
\begin{verbatim}
def firstfac_t "t>=1 & t<=n & $rep1_t(i,t) & $rep6_t(i+t,n-t)":
def twofac_t "Et1,t2 t1!=t2 & $firstfac_t(i,n,t1) & $firstfac_t(i,n,t2)":
eval checkuniq_t "~Ei,n $twofac_t(i,n)":
\end{verbatim}
and {\tt Walnut} returns {\tt TRUE}.

\end{proof}

\begin{corollary}
    There is a $28$-state DFAO that, on input $i$ and $n$ in parallel in base $2$, returns the (unique) width of the repetition factorization of ${\bf t}[i..i+n-1]$, or $0$ if there is no such factorization.
\end{corollary}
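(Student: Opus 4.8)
The plan is to mirror the construction already used for the Fibonacci word earlier in the paper: we hand the seven automata {\tt rep1\_t}, {\tt rep2\_t}, $\ldots$, {\tt rep7\_t} built in the proof of the width-$7$ bound to {\tt Walnut}'s {\tt combine} command. Recall that {\tt rep}$m${\tt \_t} accepts exactly those pairs $(i,n)$ for which ${\bf t}[i..i+n-1]$ has a repetition factorization of width at most $m$; hence these seven languages are nested, and (by the width-$7$ theorem) their union is exactly the set of factors that admit some repetition factorization. Therefore the least $m$ for which {\tt rep}$m${\tt \_t} accepts $(i,n)$ is the width of the \emph{shortest} repetition factorization of ${\bf t}[i..i+n-1]$, and no such $m$ exists precisely when that factor has no repetition factorization at all.

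Now invoke the preceding corollary: every factor of $\bf t$ that has a repetition factorization has a \emph{unique} one, so its shortest (indeed only) width equals the quantity just described. Thus it suffices to run
\begin{verbatim}
combine TMFAC rep7_t=7 rep6_t=6 rep5_t=5 rep4_t=4 rep3_t=3 rep2_t=2 rep1_t=1:
\end{verbatim}
{\tt Walnut} performs the standard direct-product construction and then minimizes, producing a single DFAO that reads $i$ and $n$ in parallel in base $2$ and outputs the desired value, emitting the default output $0$ when no component automaton accepts. One then reads off the size of the minimized machine, which is the claimed $28$.

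The one point needing care is the order of the arguments to {\tt combine}: as in the Fibonacci corollary, the automata must be listed from largest value down to smallest, so that among the nested automata accepting a given $(i,n)$ the one carrying the smallest value ``wins'' and the \emph{least} accepting index — hence, by uniqueness, the true width — is returned. Past this piece of bookkeeping there is no genuine obstacle: correctness rests solely on the nesting of the {\tt rep}$m${\tt \_t} languages and on the uniqueness corollary, and the number $28$ is simply whatever {\tt Walnut}'s DFAO-minimizer reports.
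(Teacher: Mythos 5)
Your proposal is correct and follows essentially the same route as the paper: both invoke {\tt Walnut}'s {\tt combine} command on the nested automata {\tt rep1\_t} through {\tt rep7\_t}, listed from largest to smallest value, and read off the resulting $28$-state DFAO. Your added justification via the nesting of the languages and the uniqueness corollary is a correct elaboration of what the paper leaves implicit.
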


\begin{proof}
To create this DFAO, we use the {\tt combine} command of {\tt Walnut},
which uses a familiar direct product construction on automata.
\begin{verbatim}
combine TMFAC rep7_t=7 rep6_t=6 rep5_t=5 rep4_t=4 rep3_t=3 rep2_t=2 rep1_t=1:
\end{verbatim}
It creates a $28$-state DFAO, depicted in Figure~\ref{fig2}.  To make
it more readable, the dead state, numbered $5$, with output $0$, has
been omitted.   All omitted transitions go to this state.
\end{proof}
\begin{figure}[H]
\begin{center}
   \includegraphics[width=6in]{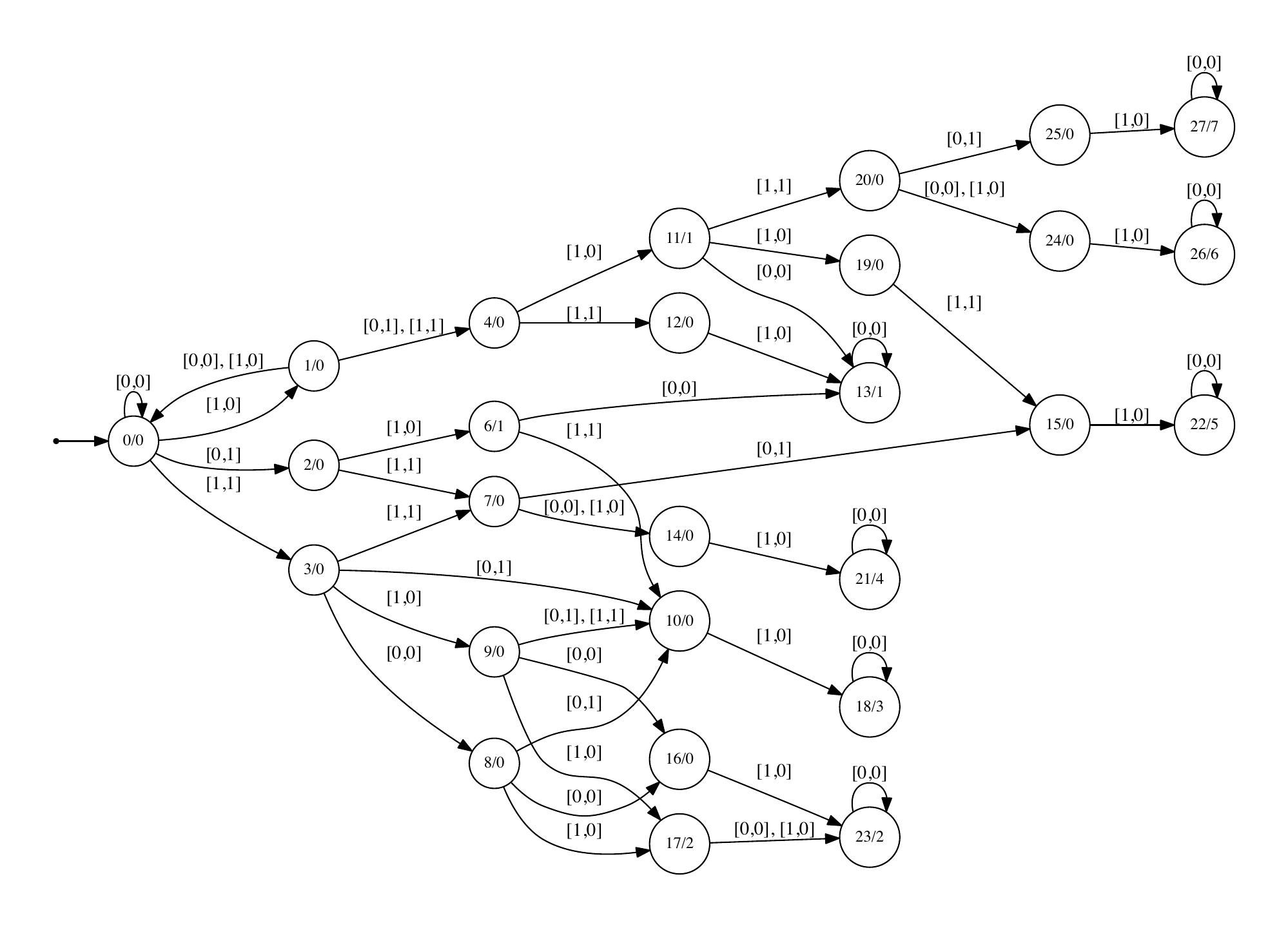}
\end{center}
\caption{DFAO to compute the (unique) width of the factorization of a Thue-Morse factor ${\bf t}[i..i+n-1]$.}
\label{fig2}
\end{figure}

\section{Results about paperfolding words}

The regular paperfolding word can be defined as the image, under the coding $n \rightarrow \lfloor n/2\rfloor$, of the fixed point of the morphism $0 \rightarrow 01$, $1 \rightarrow 21$, $2 \rightarrow 03$, $3 \rightarrow 23$.  For more about the paperfolding sequences, see
\cite[pp.~155--157, 181--185]{Allouche&Shallit:2003}.
In this section we can use our methods to investigate the shortest repetition factorizations of the factors of this word.
\begin{theorem}
    If a factor of the regular paperfolding word has a repetition
    factorization, then its width is at most $10$.
\end{theorem}

\begin{proof}
We use the following {\tt Walnut} code:
\begin{verbatim}
def per_p "(p>=1) & Aj (j+p<n) => P[i+j]=P[i+j+p]":
# 78 states
def rep1_p "Ep $per_p(i,n,p) & 2*p<=n":
# 14 states
def rep2_p "$rep1_p(i,n) | Ej j>=1 & j<n & $rep1_p(i,j) & $rep1_p(i+j,n-j)":
# 37 states
def rep3_p "$rep2_p(i,n) | Ej j>=1 & j<n & $rep1_p(i,j) & $rep2_p(i+j,n-j)":
# 46 states
def rep4_p "$rep3_p(i,n) | Ej j>=1 & j<n & $rep1_p(i,j) & $rep3_p(i+j,n-j)":
# 62 states
def rep5_p "$rep4_p(i,n) | Ej j>=1 & j<n & $rep1_p(i,j) & $rep4_p(i+j,n-j)":
# 70 states
def rep6_p "$rep5_p(i,n) | Ej j>=1 & j<n & $rep1_p(i,j) & $rep5_p(i+j,n-j)":
# 84 states
def rep7_p "$rep6_p(i,n) | Ej j>=1 & j<n & $rep1_p(i,j) & $rep6_p(i+j,n-j)":
# 87 states
def rep8_p "$rep7_p(i,n) | Ej j>=1 & j<n & $rep1_p(i,j) & $rep7_p(i+j,n-j)":
# 93 states
def rep9_p "$rep8_p(i,n) | Ej j>=1 & j<n & $rep1_p(i,j) & $rep8_p(i+j,n-j)":
# 95 states
def rep10_p "$rep9_p(i,n) | Ej j>=1 & j<n & $rep1_p(i,j) & $rep9_p(i+j,n-j)":
# 96 states
def rep11_p "$rep10_p(i,n) | Ej j>=1 & j<n & $rep1_p(i,j) & $rep10_p(i+j,n-j)":
# 96 states
eval thm9 "Ai,n $rep10_p(i,n) <=> $rep11_p(i,n)":
\end{verbatim}
\end{proof}

\begin{corollary}
    There is a DFAO that, on input $i$ and $n$ in parallel in base $2$, returns the  width of the shortest repetition factorization of ${\bf rp}[i..i+n-1]$, or $0$ if there is no such factorization.
\end{corollary}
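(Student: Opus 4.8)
The plan is to mirror the two analogous corollaries proved above, for the Fibonacci word and the Thue-Morse word. The preceding theorem has already produced, as \texttt{Walnut} predicates, the automata \texttt{rep1\_p} through \texttt{rep10\_p}, where \texttt{rep}$t$\texttt{\_p} accepts exactly those pairs $(i,n)$ for which ${\bf rp}[i..i+n-1]$ is a concatenation of at most $t$ repetitions; moreover it establishes that $10$ is a universal upper bound on the width of any repetition factorization of a factor of ${\bf rp}$. Hence, for every pair $(i,n)$ that admits a repetition factorization, the least index $t$ such that \texttt{rep}$t$\texttt{\_p} accepts $(i,n)$ equals $\sw({\bf rp}[i..i+n-1])$, while pairs admitting no factorization are accepted by none of the ten automata. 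So it suffices to stitch these ten automata into a single DFAO that, on input $(i,n)$, outputs this least $t$, and $0$ when no $t \le 10$ works.

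I would carry this out with \texttt{Walnut}'s \texttt{combine} command, which performs exactly the direct-product construction described in the proof of Theorem~\ref{thm1}. Listing the component automata from largest width down to smallest, so that the smaller width takes precedence on overlapping inputs, the command
\begin{verbatim}
combine RPFAC rep10_p=10 rep9_p=9 rep8_p=8 rep7_p=7 rep6_p=6
   rep5_p=5 rep4_p=4 rep3_p=3 rep2_p=2 rep1_p=1:
\end{verbatim}
builds a DFAO whose output on $(i,n)$ is the least $t$ for which \texttt{rep}$t$\texttt{\_p} accepts $(i,n)$, and $0$ on the remaining (dead) states; this is exactly the promised shortest-width function for ${\bf rp}$. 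As a sanity check I would also confirm, with an auxiliary query, that the set of pairs on which \texttt{RPFAC} outputs a nonzero value coincides with the language recognized by \texttt{rep10\_p}, and I would spot-check the output against small examples such as those tabulated for the Thue-Morse word.

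The construction itself is routine --- it is the same product construction already used twice --- so the only genuine concern is practical rather than mathematical. Since \texttt{rep10\_p} already has on the order of a hundred states, and a product of ten such automata is formed before minimization, the main obstacle is simply whether \texttt{Walnut} minimizes the result to a DFAO small enough to display and manipulate within reasonable time and memory. The $107$-state and $28$-state combined automata obtained for ${\bf f}$ and ${\bf t}$ respectively suggest this will not be a serious difficulty; but should the paperfolding automaton prove unwieldy, one can present it only implicitly, as was done for \texttt{FIBFAC}, since the corollary asserts only the existence of such a DFAO.
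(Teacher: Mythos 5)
Your proposal is correct and is essentially identical to the paper's proof: the paper also applies \texttt{Walnut}'s \texttt{combine} command to \texttt{rep10\_p} through \texttt{rep1\_p} in that same order (naming the result \texttt{RFFAC}), obtaining a $164$-state DFAO that it declines to display. Your additional sanity checks and remarks on feasibility are reasonable but not part of the paper's argument.
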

\begin{proof}
To create this DFAO, we use the {\tt combine} command of {\tt Walnut},
which uses a familiar direct product construction on automata.
\begin{verbatim}
combine RFFAC rep10_p=10 rep9_p=9 rep8_p=8 rep7_p=7 rep6_p=6
    rep5_p=5 rep4_p=4 rep3_p=3 rep2_p=2 rep1_p=1:
\end{verbatim}
It creates a $164$-state DFAO, too large to present here. 
\end{proof}

More generally, one could consider the (uncountable) set of all paperfolding
words.  Each paperfolding word is specified by a sequence of unfolding instructions.   

\begin{theorem}
Every factor of every paperfolding sequence, if it has a repetition factorization, has one of width at most $10$.
\end{theorem}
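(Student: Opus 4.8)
The plan is to collapse the uncountable family of paperfolding words into a single finite-state device. Recall the classical description of the paperfolding words (see \cite[pp.~181--185]{Allouche&Shallit:2003}): a word of unfolding instructions is simply an arbitrary infinite binary sequence $f = f_0 f_1 f_2 \cdots$, and the $n$-th letter $P_f[n]$ of the corresponding paperfolding word is determined by $n$ together with a single bit of $f$ (namely, writing $n = 2^k(2m+1)$, it is $f_k$ or its complement according to the parity of $m$). Consequently there is an automaton --- available in {\tt Walnut} --- that reads $n$ in base $2$ together with $f$ as a parallel input track and outputs $P_f[n]$. Because $f$ enters only as an extra input track, every first-order statement used in the proof of the previous theorem can be reinterpreted with $f$ as an additional variable, and ``$\forall f$'' is a legitimate (decidable) quantifier in this setting.

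Concretely, I would take the formulas $\per$, $\rep_1, \rep_2, \ldots, \rep_{11}$ essentially verbatim from the proof of the regular-paperfolding theorem, replacing each occurrence of ${\bf rp}[\cdot]$ by $P_f[\cdot]$ and threading $f$ through as an extra parameter: $\per(f,i,n,p)$ says $p$ is a period of $P_f[i..i+n-1]$, $\rep_1(f,i,n)$ says this factor is a repetition, and, for $t \geq 2$, $\rep_t(f,i,n)$ says it is a product of at most $t$ repetitions, via
$$ \rep_t(f,i,n) \ := \ \rep_{t-1}(f,i,n) \orr \exists j\ (j \geq 1) \andd (j<n) \andd \rep_1(f,i,j) \andd \rep_{t-1}(f,i+j,n-j).$$
The theorem then follows by checking, in {\tt Walnut}, the sentence
$$ \forall f,\, i,\, n \quad \rep_{10}(f,i,n) \iff \rep_{11}(f,i,n),$$
which I expect to evaluate to {\tt TRUE}. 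The deduction that this yields the width bound $10$ is exactly the one in the proof of Theorem~\ref{thm1}(b): if some factor of some paperfolding word required width $\geq 11$, then $\rep_{11}$ but not $\rep_{10}$ would hold for the corresponding $(f,i,n)$, a contradiction. This generalizes the preceding theorem, which is the special case where $f$ is the constant sequence.

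The main obstacle is purely computational. Each added input track enlarges the alphabet of every intermediate automaton, and the $\rep_t$ automata already grow with $t$ in the regular-paperfolding case; so the determinization and minimization needed to build $\rep_{11}$ and then to decide the final equivalence may be the bottleneck in time and memory. A secondary point requiring care is the encoding: one must present $n$ and the bits of $f$ to {\tt Walnut}'s paperfolding automaton with the indexing conventions it expects (0- versus 1-based, for both $n$ and $f$), and one must ensure that no constraint is silently imposed on $f$, since every infinite binary sequence $f$ corresponds to a genuine paperfolding word. Once the formulas are set up correctly, the verification is a mechanical {\tt Walnut} computation, wholly parallel to the regular-paperfolding case but with $f$ quantified as well.
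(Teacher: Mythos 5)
Your proposal is correct and follows essentially the same route as the paper: the authors thread the unfolding-instruction sequence $f$ through the $\per$ and $\rep_1,\ldots,\rep_{11}$ predicates as an extra input track, and verify $\forall f,i,n\ \rep_{10}(f,i,n) \iff \rep_{11}(f,i,n)$ in {\tt Walnut}. The implementation details you flag as requiring care (indexing conventions, not silently constraining $f$) are handled in the paper by a {\tt link} predicate tying each finite instruction word $f$ to the valid index range of the corresponding finite paperfolding word, but the argument is the one you describe.
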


\begin{proof}
We can use {\tt Walnut} 
to handle logical formulas involving quantification over
all paperfolding words.     We refer
to \cite[Chap.~12]{Shallit:2022} for the full details.

The following {\tt Walnut} code performs the verification.
\begin{verbatim}
reg link {-1,0,1} lsd_2 "([-1,1]|[1,1])*[0,0]*":

def paper "?lsd_2 Ex $link(f,x) & i>=1 & i+n<=x+1 & Aj (j+p<n) =>
   FOLD[f][i+j]=FOLD[f][i+j+p]":
# 166 states

def rep1 "?lsd_2 Ex,p p>=1 & $link(f,x) & i>=1 & i+n<=x+1 & 
   $paper(f,i,n,p) & 2*p<=n":
# 55 states

def rep2 "?lsd_2  Ex $link(f,x) & i>=1 & i+n<=x+1 & ($rep1(f,i,n) |
   Ej j>=1 & j<n & $rep1(f,i,j) & $rep1(f,i+j,n-j))":
# 106 states

def rep3 "?lsd_2  Ex $link(f,x) & i>=1 & i+n<=x+1 & ($rep2(f,i,n) |
   Ej j>=1 & j<n & $rep1(f,i,j) & $rep2(f,i+j,n-j))":
# 214 states

def rep4 "?lsd_2  Ex $link(f,x) & i>=1 &  i+n<=x+1 & ($rep3(f,i,n) |
   Ej j>=1 & j<n & $rep1(f,i,j) & $rep3(f,i+j,n-j))":
# 303 states

def rep5 "?lsd_2  Ex $link(f,x) & i>=1 & i+n<=x+1 & ($rep4(f,i,n) |
   Ej j>=1 & j<n & $rep1(f,i,j) & $rep4(f,i+j,n-j))":
# 367 states

def rep6 "?lsd_2  Ex $link(f,x) & i>=1 & i+n<=x+1 & ($rep5(f,i,n) |
   Ej j>=1 & j<n & $rep1(f,i,j) & $rep5(f,i+j,n-j))":
# 425 states

def rep7 "?lsd_2  Ex $link(f,x) & i>=1 &  i+n<=x+1 & ($rep6(f,i,n) |
   Ej j>=1 & j<n & $rep1(f,i,j) & $rep6(f,i+j,n-j))":
# 467 states

def rep8 "?lsd_2  Ex $link(f,x) & i>=1 &  i+n<=x+1 & ($rep7(f,i,n) |
   Ej j>=1 & j<n & $rep1(f,i,j) & $rep7(f,i+j,n-j))":
# 487 states

def rep9 "?lsd_2  Ex $link(f,x) & i>=1 & i+n<=x+1 & ($rep8(f,i,n) |
   Ej j>=1 & j<n & $rep1(f,i,j) & $rep8(f,i+j,n-j))":
# 485 states

def rep10 "?lsd_2  Ex $link(f,x) & i>=1 & i+n<=x+1 & ($rep9(f,i,n) |
   Ej j>=1 & j<n & $rep1(f,i,j) & $rep9(f,i+j,n-j))":
# 487 states

def rep11 "?lsd_2  Ex $link(f,x) & i>=1 &  i+n<=x+1 & ($rep10(f,i,n) |
   Ej j>=1 & j<n & $rep1(f,i,j) & $rep10(f,i+j,n-j))":
# 487 states

eval thm12 "?lsd_2 Af,i,n $rep10(f,i,n) <=> $rep11(f,i,n)":
\end{verbatim}
and the last command returns {\tt TRUE}.
\end{proof}

One could also ask if every infinite paperfolding sequence contains
a factor of shortest width $10$.   Perhaps surprisingly, this is not
the case; some paperfolding sequences, such as the one
coded by $\overline{1} 111\cdots$, have the property that all factors
having a repetition factorization have one of width at most $7$.  We
can use {\tt Walnut} to prove the following theorems.

\begin{theorem}
Let $u \in \{ -1, +1 \}$ be a finite sequence of unfolding instructions.
and let ${\bf f}_u$ be the finite paperfolding word coded by $u$.
Consider the finite factors $x$ of ${\bf f}_u$ having a repetition factorization.
If $|u| \geq 8$, then either
\begin{itemize}
\item  All finite factors $x$ of ${\bf f}_u$ having a repetition
factorization satisfy $\sw(x) \leq 7$; or

\item There is at least one factor $x$ of ${\bf f}_u$ with $\sw(x) = 10$.
\end{itemize}
Hence the same dichotomy holds for all infinite paperfolding sequences.
\end{theorem}

\begin{proof}
We can prove this with the following {\tt Walnut} code:
\begin{verbatim}
def exact10 "?lsd_2 $rep10(f,i,n) & ~$rep9(f,i,n)":
# 58 states

def thm13 "?lsd_2 Af,x ($link(f,x) & x>=255) => 
   ((Ai,n $rep10(f,i,n) => $rep7(f,i,n))|(Ei,n $exact10(f,i,n)))":
\end{verbatim}
and {\tt Walnut} returns {\tt TRUE}.
\end{proof}

Furthermore, we can create an automaton that accepts exactly those finite
sequences of unfolding instructions, over $\{ -1, +1 \}$, with the 
property that case (a) holds.  To do so, we use the following
{\tt Walnut} code:
\begin{verbatim}
reg good {-1,0,1} "(1|[-1])*":
def seven "?lsd_2 $good(f) & Ai,n $rep10(f,i,n) => $rep7(f,i,n)":
\end{verbatim}
The result is displayed in Figure~\ref{sev}.
\begin{figure}[htb]
\begin{center}
   \includegraphics[width=6.5in]{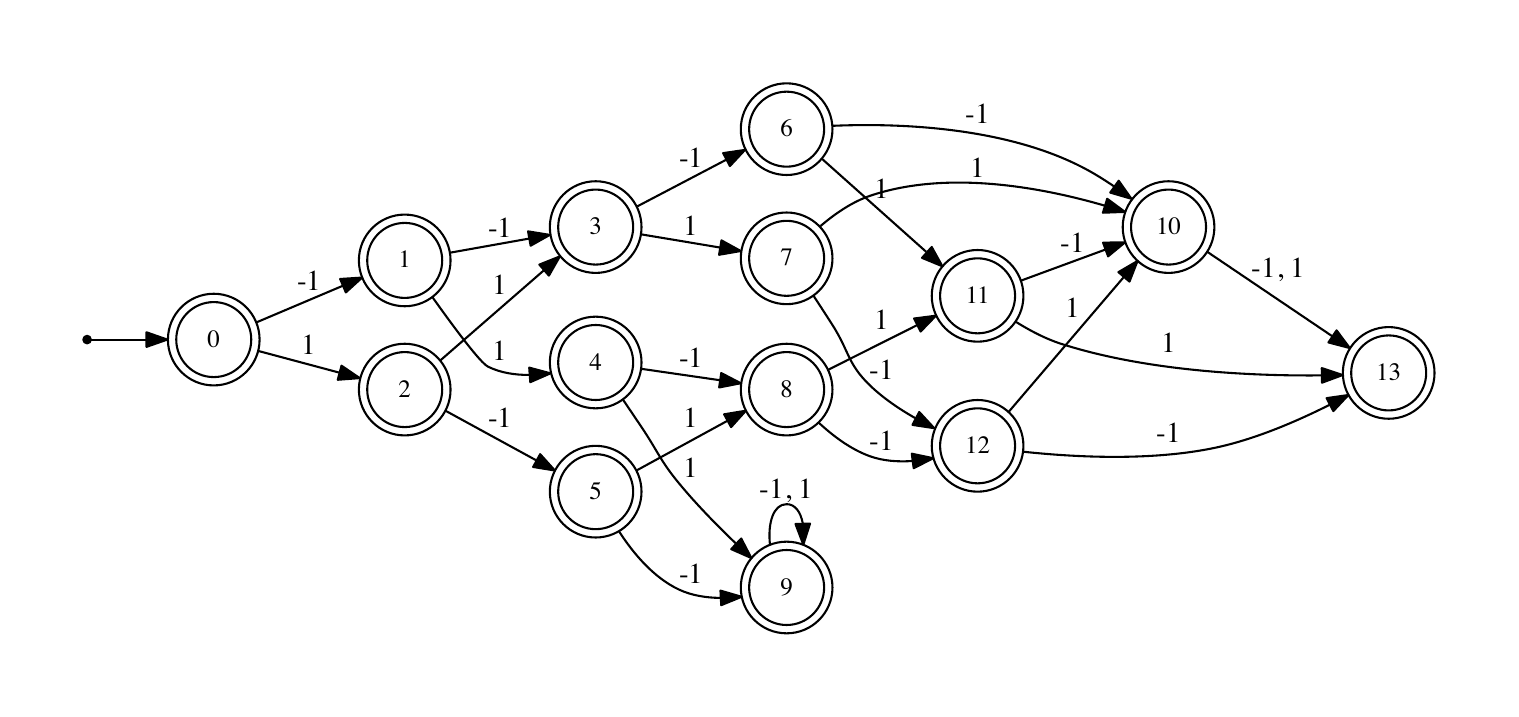}
\end{center}
\caption{Unfolding instructions for paperfolding sequences where all factors $x$
have $\sw(x) \leq 7$.}
\label{sev}
\end{figure}

\begin{proposition}
Every factor of a paperfolding word having a repetition factorization
is of length at most $45$.
\end{proposition}

\begin{proof}
We can ask {\tt Walnut} to compute the lengths of all factors of paperfolding
words having a repetition factorization, has follows:
\begin{verbatim}
def lengths "?lsd_2 Ef,i $rep10(f,i,n)":
\end{verbatim}
The resulting automaton, displayed in Figure~\ref{fig22}, accepts
no number larger than $45$.  (Recall that in this case, the automaton
is accepting the base-$2$ representation of numbers starting with the
{\it least\/} significant digit.)
\begin{figure}[htb]
\begin{center}
   \includegraphics[width=6in]{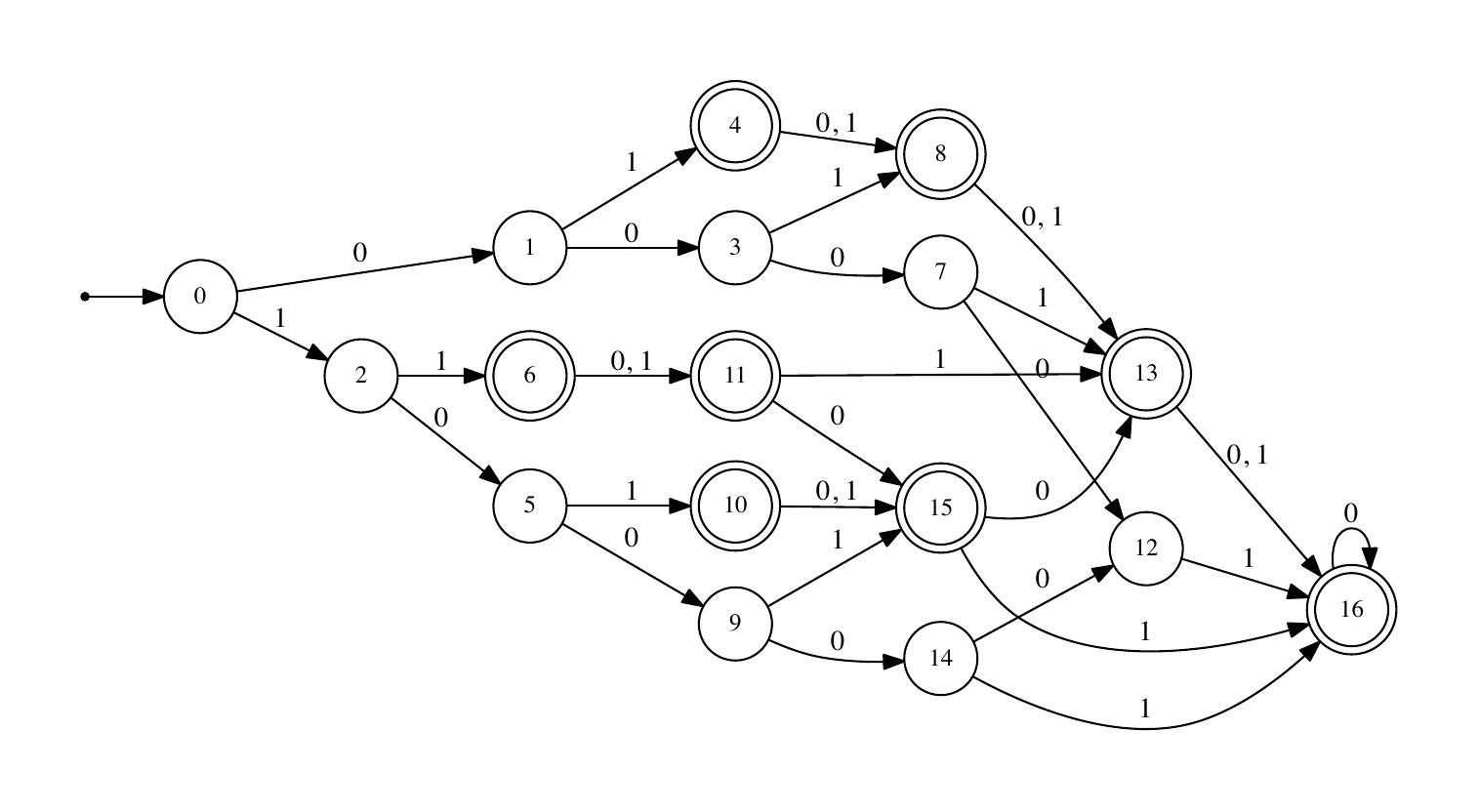}
\end{center}
\caption{Possible lengths of factors of a paperfolding sequence with a repetition factorization.}
\label{fig22}
\end{figure}
\end{proof}

\section{Results about the Rudin-Shapiro sequence}

In some cases we can also use {\tt Walnut} to prove results
about longest factorizations.   This section gives an example.

\begin{theorem}
    There are factors of the Rudin-Shapiro sequence  with repetition factorization of arbitrarily large width.
\end{theorem}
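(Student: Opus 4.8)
The plan is to exhibit, for each $k$, an explicit factor $w_k$ of the Rudin-Shapiro word $\mathbf{r}$ together with a repetition factorization of $w_k$ of width at least $k$; letting $k\to\infty$ then gives the theorem (note this is a statement about \emph{widest}, not shortest, factorizations: it asserts that $\lw$ is unbounded on factors of $\mathbf{r}$, the opposite of the situation for $\mathbf{f}$, $\mathbf{t}$, and the paperfolding words). The first thing to notice is that the obvious constructions are unavailable: $\mathbf{r}$ is linearly recurrent, hence has finite critical exponent, so it contains no $z^m$ with $m$ large and in particular has only bounded runs. Thus one cannot simply chain many copies of a single ``square gadget'': the factors $w_k$ must be aperiodic at the coarse scale. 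Accordingly, I would first run a {\tt Walnut}-assisted search, using the formulas $\rep_t$ of Section~2 and the formula $\rep_t(i,n)\andd\neg\rep_{t-1}(i,n)$ to locate factors of $\mathbf{r}$ of widths $1,2,3,\dots$; observing that, unlike in the Thue-Morse and paperfolding cases, these widths keep growing, I would try to read off a uniform pattern for the starting positions $i_k$ and lengths $n_k$.

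Concretely, I expect the right object to be a (say, uniform) length-increasing morphism $h\colon\{0,1\}^\ast\to\{0,1\}^\ast$ such that (i) both $h(0)$ and $h(1)$ are repetitions, and (ii) $h$ carries arbitrarily long prefixes (or factors) of some aperiodic word $\mathbf{s}$ — for instance $\mathbf{r}$ itself, through one of the standard decimation/block identities for Rudin-Shapiro, or the Thue-Morse word — into factors of $\mathbf{r}$. Given such an $h$, the word $w_m:=h(\mathbf{s}[0..m-1])=h(\mathbf{s}[0])\,h(\mathbf{s}[1])\cdots h(\mathbf{s}[m-1])$ is a factor of $\mathbf{r}$ of length $\Theta(m)$, and the displayed grouping is a repetition factorization of $w_m$ of width $m$. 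Verifying (i) is a finite check. Verifying (ii) for \emph{every} $m$ cannot be done by a single finite {\tt Walnut} run, since $m$ ranges over an unbounded set and cannot appear as a logical variable; instead I would argue by induction using the self-similar structure of $\mathbf{r}$, i.e.\ the block recursion $p_{n+1}=p_nq_n$, $q_{n+1}=p_n\overline{q_n}$, showing that a member of the family at one level contains a member at the previous level after one application of the recursion. (As a fallback, one can give an explicit family $w_k$ built directly from that recursion and double-check the first several instances with {\tt Walnut}.)

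The main obstacle is the first step: finding a morphism $h$ (equivalently, an explicit family) that is simultaneously ``granular'' — with repetitions for images — and compatible with the substitutive structure of $\mathbf{r}$, since the two requirements pull in opposite directions. Because $\mathbf{r}$ forbids high powers, $h(0)$ and $h(1)$ cannot be periodic beyond their own square structure, while the admissible concatenations $h(\mathbf{s})$ are constrained by the automaton computing $\mathbf{r}$; reconciling these constraints — and in particular choosing $\mathbf{s}$ (e.g.\ square-free or cube-free) so that no long run of blocks $h(\mathbf{s}[j])$ collapses into a single coarse repetition — is where the real work lies, and is also what would determine whether $\sw$ (and not just $\lw$) is unbounded here. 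Once a suitable $h$ is in hand, the remaining steps are routine: checking that $h(0)$ and $h(1)$ are repetitions is a direct computation, and checking that all $h(\mathbf{s}[0..m-1])$ occur in $\mathbf{r}$ is the inductive argument via the block recursion above (or, for each fixed small case, a {\tt Walnut} query).
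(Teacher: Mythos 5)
Your write-up is a strategy outline rather than a proof, and its central ingredient is missing: you never exhibit the morphism $h$ (or the explicit family $w_k$), and you yourself identify finding it as ``where the real work lies.'' Without that object, nothing is proved --- conditions (i) and (ii) are requirements on a construction that does not yet exist, and the inductive argument via the block recursion $p_{n+1}=p_nq_n$, $q_{n+1}=p_n\overline{q_n}$ cannot even be started. So as it stands there is a genuine gap, not a complete argument.

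It is worth noting how the paper avoids ever needing such an explicit self-similar family. Instead of demanding a uniform pattern (a morphism whose images are repetitions), it only needs the weaker fact that the set $S=\{n : {\bf rs}[0..n-1] \text{ factors into repetitions each of length} \leq 8\}$ is infinite; any $n\in S$ then forces a factorization of width at least $n/8$. The set $S$ is $2$-automatic, so one can \emph{guess} a candidate automaton for it (here, $23$ states, via Myhill--Nerode on empirical data) and then verify the guess with a single {\tt Walnut} query expressing the induction step ``if $n-i\in S$ and ${\bf rs}[n-i..n-1]$ is a repetition with $i\leq 8$, then $n\in S$,'' followed by a query that $S$ is unbounded. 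This sidesteps exactly the obstacle you flag --- that the bounded-length quantity ($m$, the width) cannot appear as a logical variable --- by re-encoding the unbounded quantity as a prefix length $n$, which \emph{can}. Your concern that the factors must be ``aperiodic at the coarse scale'' is correct but turns out to be harmless: the pieces need not be copies of one gadget, only uniformly short, and the automaton for $S$ absorbs all the irregularity. If you want to salvage your plan, the most direct fix is to replace the search for $h$ with the guess-and-verify argument on $S$; otherwise you must actually produce $h$ and prove (i) and (ii), which your proposal does not do.
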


\begin{proof}
    Let's guess that arbitrarily long prefixes of Rudin-Shapiro have repetition factorizations where
each term is of length $\leq L$, for some $L$ to be determined.

Now empirically test this by computing the repetition factorizations for various $L$ up to the first $n$ terms, for $n$ large.

We quickly see that, empirically, it suffices to take $L = 8$.

Now we guess that there is an automaton that, on input $n$ in base $2$, accepts iff ${\bf rs}[0..n-1]$ is factorizable into
repetitions of length $\leq 8$.   ``Guess" this automaton using the Myhill-Nerode theorem.
The guess is an automaton of $23$ states, called {\tt rsrf}.  

Once we have the guess, we can verify its correctness as follows using Walnut:
\begin{verbatim}
def per_r "p>0 & p<=n & Aj (j>=i & j+p<i+n) => RS[j]={RS[j+p]"::
def rep_r "Ep p>=1 & p<=j & $per_r(i,j,p) & 2*p<=j":
eval test "An,i (n>8 & i<=8 & $rsrf(n-i) & $rep_r(n-i,i)) => $rsrf(n)":
\end{verbatim}

Here 
\begin{itemize}
    \item {\tt per\_r}$(i,n,p)$ asserts that ${\tt rs}[i..i+n-1]$ has period $p$
\item {\tt rep\_r}$(i,n)$ asserts that ${\tt rs}[i..i+n-1]$ is a repetition
\item {\tt test} asserts that if ${\tt rs}[0..n-i-1]$ is factorizable into repetitions of length $\leq 8$ and ${\tt rs}[n-i..n-1]$ is a repetition,
then ${\tt rs}[0..n-1]$ is also so factorizable. 
\end{itemize}
This provides the induction step of an induction proof that our
automaton {\tt rsrf} is actually correct.
(The base case also needs to be verified, but this is easy).
Walnut responds {\tt TRUE} for the last assertion, so {\tt rsrf} is correct.

Now we just need to verify that ${\tt rsrf}(n)$ is true for arbitrarily large $n$:
\begin{verbatim}
eval test2 "An Em (m>n) & $rsrf(m)":
\end{verbatim}
which also evaluates to {\tt TRUE}.
\end{proof}

\section{Open problem}

We do not know if it is decidable, given a (generalized) automatic sequence, whether it has words with arbitrarily large width of shortest factorizations.

\end{document}